\theoremstyle{plain}
\newtheorem{theorem}{Theorem}
\newtheorem{proposition}[theorem]{Proposition}
\theoremstyle{definition}
\newtheorem*{example}{Example}
\newtheorem*{remark}{Remark}
\newtheorem{definition}{Definition}
\newcommand{\N}{\mathcal{N}}
\newcommand{\Hi}{\mathcal{H}}
\newcommand{\Hii}{\mathcal{H}_0}
\newcommand{\al}{\alpha}
\newcommand{\f}{\varphi}
\newcommand{\la}{\lambda}
\newcommand{\La}{\Lambda}
\renewcommand{\leq}{\leqslant}
\renewcommand{\geq}{\geqslant}
\begin{document}
\title{Weakly sufficient quantum statistics}
\author{Katarzyna Lubnauer, Andrzej \L uczak, Hanna Pods\c{e}dkowska}
\address{Faculty of Mathematics and Computer Science\\
        \L\'od\'z University \\
        ul. S. Banacha 22 \\
        90-238 \L\'od\'z, Poland}
\email[Katarzyna Lubnauer]{lubnauer@math.uni.lodz.pl}

\email[Andrzej \L uczak]{anluczak@math.uni.lodz.pl}

\email[Hanna Pods\c{e}dkowska]{hpodsedk@math.uni.lodz.pl}

\keywords{Quantum statistic, weak sufficiency, von Neumann algebra}
\subjclass{Primary: 46L53; Secondary: 81S05}
\date{}
\begin{abstract}
  Some aspects of weak sufficiency of quantum statistics are investigated. In particular, we give necessary and sufficient conditions for the existence of a weakly sufficient statistic for a given family of vector states, investigate the problem of its minimality, and find the relation between weak sufficiency and other notions of sufficiency employed so far.
\end{abstract}
\maketitle

\section{Introduction}
Weak sufficiency of a quantum statistic was introduced in \cite{C} (under the name of `sufficiency'), and afterwards analyzed in \cite{LLP}. The definition of weak sufficiency is closely related to the classical factorization criterion, and seems to be especially well motivated in the case when we are dealing with the full algebra of bounded operators on a Hilbert space together with vector states. In the present paper, which can be considered as a follow-up to \cite{LLP}, we continue the investigation of this notion.  Three questions are dealt with: the problem of the existence of a weakly sufficient statistic for a given family of states, minimality of weakly sufficient statistics, and the relation of weak sufficiency to other notions of sufficiency. In particular, we show that, essentially, weak sufficiency follows from any of the notions of sufficiency employed so far.

\section{Preliminaries, notation and the quantum setup}
Let $\mathcal{H}$ be a Hilbert space with a scalar product $\langle\cdot,\cdot\rangle$. For an orthogonal projection $p$ we set
$p^{\bot}=\boldsymbol{1}-p$, where $\boldsymbol{1}$ is the identity
operator on $\mathcal{H}$. For $\xi\in\mathcal{H}$ we shall denote by $P_{[\xi]}$ the orthogonal projection onto the subspace spanned by $\xi$. In what follows the word \emph{projection} will always mean \emph{orthogonal projection}. $\mathbb{B}(\mathcal{H})$ will stand
for the algebra of all bounded linear operators on $\mathcal{H}$.

By a von Neumann algebra $\mathcal{M}$ of operators acting on
$\mathcal{H}$ we mean a $*$-algebra
$\mathcal{M}\subset\mathbb{B}(\mathcal{H})$ which is closed in the
strong operator topology on $\mathbb{B}(\mathcal{H})$, i.e., the
topology given by the family of seminorms
\[
 \mathbb{B}(\mathcal{H})\ni x\mapsto\|x\xi\|,\qquad
 \xi\in\mathcal{H}.
\]

For a von Neumann algebra of operators $\mathcal{M}$ acting on a
Hilbert space $\mathcal{H}$ we denote by $\mathcal{M}'$ the
\emph{commutant} of $\mathcal{M}$, i.e., the algebra of all bounded
operators on $\mathcal{H}$ which commute with all the operators
from $\mathcal{M}$. In particular, if $\mathcal{M}$ is abelian
then $\mathcal{M}\subset\mathcal{M}'$.

An abelian von Neumann algebra $\mathcal{M}$ is called \emph{maximal
abelian} if $\mathcal{M}=\mathcal{M}'$.

Let $p'$ be a projection in $\mathcal{M}'$. Then we can consider the
operators $xp'$ with $x\in\mathcal{M}$, restricted to the space
$p'\mathcal{H}$. The von Neumann algebra
\[
 \mathcal{M}_{p'}=\{xp'|p'\mathcal{H}:x\in\mathcal{M}\}
\]
of operators acting on the Hilbert space $p'\mathcal{H}$ is called
an \emph{induced von Neumann algebra}.

We shall not use any advanced theory of von Neumann algebras; some
basic necessary facts can be found for instance in
\cite{KR1,KR2,T}.

The $\sigma$-field of Borel subsets of the real line $\mathbb{R}$
will be denoted by $\mathcal{B}(\mathbb{R})$.

The most basic `probability space' employed to describe a quantum
system consists of a separable Hilbert space $\mathcal{H}$ and the
so-called pure (or vector) state represented by a unit vector
$\varphi\in\mathcal{H}$. A `noncommutative (or quantum) random
variable', called usually \emph{observable}, is a selfadjoint operator
$T$ on $\mathcal{H}$, so that for the spectral decomposition
\begin{equation}\label{T}
 T=\int_{-\infty}^{\infty}\lambda\,e(d\lambda)
\end{equation}
of $T$, where
$e\colon\mathcal{B}(\mathbb{R})\to\mathcal{P}(\mathcal{H})$ is the
spectral ($\equiv$ projection-valued) measure of $T$, the quantity
\[
 \langle e(E)\varphi,\varphi\rangle=\|e(E)\varphi\|^2,\qquad
 E\in\mathcal{B}(\mathbb{R}),
\]
represents the probability that being in the state $\varphi$ the
observable $T$ takes value in the set $E$.

The function $\mu_{\varphi}(\cdot)=\langle
e(\cdot)\varphi,\varphi\rangle$ is a genuine probability
distribution on the space $(\mathbb{R},\,\mathcal{B}(\mathbb{R}))$
of all possible values of the observable $T$. In quantum
statistics we are dealing with a family
$\{\varphi_{\theta}:\theta\in\Theta\}$ of possible states from
which we want to pick up the true state of our physical system.
Consequently, a `quantum statistical space' in our case is
$(\mathcal{H},\{\varphi_{\theta}:\theta\in\Theta\})$. As in the
classical case, various observables (i.e., `quantum random variables') prove
useful in obtaining information about this true state, so we shall
use for them the name of (quantum) \emph{statistics}.

A family of vector states $\{\f_{\theta}:\theta\in\Theta\}$ is said to be \emph{faithful} if for each $x\geq 0$ from $\mathbb{B}(\Hi)$ the equality $\langle x\f_{\theta},\f_{\theta}\rangle=0$ for all $\theta\in\Theta$ implies $x=0$. It is easily seen that $\{\f_{\theta}:\theta\in\Theta\}$ is faithful if and only if $[\{\f_{\theta}:\theta\in\Theta\}]=\Hi$, where for $\mathcal{K}\subset\mathcal{H}$, $[\mathcal{K}]$ stands for
the smallest closed linear subspace of $\mathcal{H}$ containing
$\mathcal{K}$.

Let $\mathcal{N}$ be the von Neumann algebra
generated by a quantum statistic $T$. Then for $T$ having the
spectral decomposition given by equation \eqref{T} we have
\begin{align*}
 \mathcal{N}=\Big\{\Phi(T)=&\int_{-\infty}^{\infty}\Phi(\lambda)
 \,e(d\lambda):\\
 &\Phi\text{ --- a complex-valued bounded Borel function}\Big\}.
\end{align*}
Thus $\mathcal{N}$ is an abelian von Neumann algebra determined by the spectral measure $e$. Note that the statistic $T$ itself needn't belong to the algebra $\mathcal{N}$ since the function $\Phi(\lambda)=\lambda$ is not bounded. We shall be concerned with statistics of the form
\[
 \Phi(T)=\int_{-\infty}^{\infty}\Phi(\lambda)\,e(d\lambda),
\]
where $\Phi$ is a real-valued Borel function. Such statistics, which are selfadjoint operators, are said to be \emph{affiliated with} $\mathcal{N}$.

The most general notion of sufficiency was introduced in \cite{P1,P2} (and further studied in \cite{JP}) as follows. Let $\mathcal{M}$ be a von Neumann algebra and let $\mathcal{N}$ be its von Neumann subalgebra. $\mathcal{N}$ is said to be \emph{sufficient} for a family of (arbitrary, not necessarily vector) states $\{\varphi_{\theta}:\theta\in\Theta\}$ if there exists a two-positive normal unital map $\alpha$ from $\mathcal{M}$ into $\mathcal{N}$ such that
\[
 \varphi_{\theta}\circ\alpha=\varphi_{\theta}
 \qquad\text{for all }\,\theta\in\Theta.
\]
In particular, if $\alpha$ is a conditional expectation then we obtain the notion of sufficiency introduced by H. Umegaki (see \cite{U1,U2}).

In our case the algebra $\mathcal{N}$ is the abelian algebra generated by a selfadjoint operator $T$ (or equivalently by its spectral measure), \linebreak $\mathcal{M}=\mathbb{B}(\mathcal{H})$, and the states $\varphi_{\theta}$ are vector states.

The definition of weak sufficiency we adopt was introduced in \cite{C} and consists in, roughly speaking, the possibility of obtaining the vectors $\varphi_{\theta}$ by applying functions of $T$ to a given vector $\chi\in\mathcal{H}$, in accordance with the classical factorization criterion. However, since for any vector state $\varphi$ and any complex number $c$ of modulus one the vector $c\varphi$ determines the same state as $\varphi$, in the definition of weak sufficiency we should take this fact into account. To put it in a simple way, let us agree to call a vector $\widetilde{\varphi}$ a \emph{version} of the vector $\varphi$ if there is a complex number $c$ of modulus one such that
\[
 \widetilde{\varphi}=c\varphi.
\]
Now the definition of weak sufficiency reads
\begin{definition}
 A statistic $T$ is \emph{weakly sufficient} for a family of vector
 states $\{\varphi_{\theta}:\theta\in\Theta\}$ if there exist Borel functions
 $\Phi_{\theta}\colon\mathbb{R}\to\mathbb{R}$, a unit vector
 $\chi$ in $\mathcal{H}$, and a version $\widetilde{\varphi}_{\theta}
 \text{ of }\varphi_{\theta}$ for each $\theta\in\Theta$, such that
 \begin{equation*}
  \widetilde{\varphi}_{\theta}=\Phi_{\theta}(T)\chi\qquad\text{for
  all }\,\theta\in\Theta.
 \end{equation*}
\end{definition}
Obviously, the requirement that $\|\chi\|=1$ is inessential, and
we shall omit it in the sequel.

For later use we state here two main results from \cite{LLP}.
\begin{theorem}[general case]\label{M1}
Let $T$ be a quantum statistic on a separable Hilbert space $\mathcal{H}$ with
the spectral decomposition given by \eqref{T}, and let
$\{\varphi_{\theta}:\theta\in\Theta\}$ be a family of vector states in $\mathcal{H}$. Denote, as before, by $\mathcal{N}$ the von Neumann algebra generated by $T$. The following conditions are equivalent:
\begin{enumerate}
 \item[(i)] $T$ is weakly sufficient for $\{\varphi_{\theta}:\theta\in
 \Theta\}$.
 \item[(ii)] There is a projection $p'$ in $\mathcal{N}'$ such
 that $\{\varphi_{\theta}:\theta\in\Theta\}\subset p'\mathcal{H}$ and the
 induced algebra $\mathcal{N}_{p'}$ is maximal abelian, and for each
 $\theta\in\Theta$ there is a version $\widetilde{\varphi}_{\theta}
 \text{ of }\varphi_{\theta}$ such that for any Borel set
 $E\subset\mathbb{R}$ and any  $\theta',\theta''\in\Theta$ we have
 \begin{equation}\label{rzecz1}
 \langle e(E)\widetilde{\varphi}_{\theta'},\widetilde{\varphi}_{\theta''}
 \rangle\in\mathbb{R}.
 \end{equation}
\end{enumerate}
\end{theorem}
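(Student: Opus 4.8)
\medskip

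The plan is to treat the two implications separately, in both directions leaning on the representation of a maximal abelian von Neumann algebra on a separable space as a multiplication algebra $L^\infty(X,\mu)$ acting on $L^2(X,\mu)$.

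For the implication (i) $\Rightarrow$ (ii) I would start from weak sufficiency, so $\widetilde{\f}_\theta=\Phi_\theta(T)\chi$ with real Borel $\Phi_\theta$, and take $p'$ to be the projection onto the cyclic subspace $[\N\chi]$. Invariance of $[\N\chi]$ under $\N$ gives $p'\in\N'$, and $\chi$ is by construction cyclic for $\N_{p'}$, so $\N_{p'}$ is maximal abelian. To see that $\widetilde{\f}_\theta\in p'\Hi$ despite $\Phi_\theta$ being possibly unbounded, I would truncate to $\Phi_\theta^{(n)}=\Phi_\theta\,\mathbf{1}_{\{|\Phi_\theta|\leq n\}}$; then $\Phi_\theta^{(n)}(T)\in\N$, hence $\Phi_\theta^{(n)}(T)\chi\in\N\chi$, and spectral calculus gives $\Phi_\theta^{(n)}(T)\chi\to\Phi_\theta(T)\chi=\widetilde{\f}_\theta$, so $\widetilde{\f}_\theta\in[\N\chi]=p'\Hi$. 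The reality condition \eqref{rzecz1} is then a direct computation: since $e(E)$ commutes with the self-adjoint operators $\Phi_{\theta'}(T),\Phi_{\theta''}(T)$,
\[
 \langle e(E)\widetilde{\f}_{\theta'},\widetilde{\f}_{\theta''}\rangle
 =\int_E\Phi_{\theta'}(\la)\Phi_{\theta''}(\la)\,d\mu_\chi(\la),
\]
which is real because the integrand is real and $\mu_\chi(\cdot)=\langle e(\cdot)\chi,\chi\rangle$ is a positive measure.

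For (ii) $\Rightarrow$ (i) I would fix a unitary identifying $\N_{p'}$ on $p'\Hi$ with $L^\infty(X,\mu)$ acting by multiplication on $L^2(X,\mu)$; here maximal abelianness is essential, as it forces multiplicity one. Under this identification the restriction of $T$ becomes multiplication by a real measurable $\tau$ generating $L^\infty(X,\mu)$, the projection $e(E)$ becomes multiplication by $\mathbf{1}_{\tau^{-1}(E)}$, and the vectors $\widetilde{\f}_\theta$ become functions $f_\theta\in L^2(X,\mu)$. Since $\tau$ generates the whole measure algebra, the sets $\tau^{-1}(E)$ exhaust it, so \eqref{rzecz1} reads $\int_A\operatorname{Im}(f_{\theta'}\overline{f_{\theta''}})\,d\mu=0$ for every measurable $A$, that is $f_{\theta'}(x)\overline{f_{\theta''}(x)}\in\mathbb{R}$ for a.e.\ $x$: for each pair the values $f_{\theta'}(x),f_{\theta''}(x)$ lie on one real line through the origin. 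The aim is now to produce one function $g$ (the image of the sought $\chi$) and real functions $h_\theta$ with $f_\theta=h_\theta g$, for then $h_\theta=\Phi_\theta\circ\tau$ with $\Phi_\theta$ real Borel (as $\sigma(\tau)$ is the full measure algebra) and, transported back, $\widetilde{\f}_\theta=\Phi_\theta(T)\chi$, which is weak sufficiency. Using separability of $\Hi$ I would pick a countable $\{\theta_n\}$ with $[\{\widetilde{\f}_{\theta_n}\}]=[\{\widetilde{\f}_\theta\}]$; intersecting countably many conull sets, for a.e.\ $x$ all $f_{\theta_n}(x)$ lie on a single line, spanned by a measurably chosen unit vector $v(x)$ (say the normalised first nonzero $f_{\theta_n}(x)$). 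Put $g=wv$ where $w=\sum_n c_n|f_{\theta_n}|$ with $\sum_n c_n\|f_{\theta_n}\|_2<\infty$, so $g\in L^2$ and $g\neq 0$ exactly on $S:=\bigcup_n\{f_{\theta_n}\neq 0\}$; then $h_\theta:=f_\theta/g$ is real wherever defined. A general $f_\theta$ is supported in $S$ and lies along $v(x)$ because $\widetilde{\f}_\theta$ belongs to the closed span of $\{\widetilde{\f}_{\theta_n}\}\subset L^2(S)$ and is real against every $f_{\theta_n}$.

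I expect the heart of the argument, and its main obstacle, to be this second implication: passing from the merely pairwise, $\theta$-by-$\theta$ reality relations to a single function $g$ that works for the entire, possibly uncountable, family at once. This forces one to juggle the measurable selection of the common direction $v(x)$, the reduction to a countable subfamily dictated by separability, and the requirement that $g$ vanish exactly on the complement of $S$ so that all the quotients $f_\theta/g$ are simultaneously well defined and real.
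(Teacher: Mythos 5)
The paper itself offers no proof of Theorem~\ref{M1}: it is imported verbatim from \cite{LLP} (``we state here two main results from \cite{LLP}''), so there is no in-paper argument to set yours against; judged on its own, your outline is correct and is the natural route. For (i)~$\Rightarrow$~(ii) every ingredient is standard: $[\N\chi]$ is $\N$-invariant, so its projection $p'$ lies in $\N'$; an abelian von Neumann algebra with a cyclic vector is maximal abelian; your truncation $\Phi_\theta^{(n)}$ converges precisely because $\chi$ must lie in the domain of $\Phi_\theta(T)$ for weak sufficiency to make sense; and the integral $\int_E\Phi_{\theta'}\Phi_{\theta''}\,d\mu_\chi$ converges by Cauchy--Schwarz and is real. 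For (ii)~$\Rightarrow$~(i) you correctly identify the crux --- upgrading the pairwise a.e.\ condition $f_{\theta'}\overline{f_{\theta''}}\in\mathbb{R}$ to a single $g$ with all quotients $f_\theta/g$ real --- and your solution (reduce to a countable subfamily by separability, intersect conull sets, take $v(x)$ to be the normalised first nonvanishing $f_{\theta_n}(x)$, weight by $w=\sum_nc_n|f_{\theta_n}|$ so that $g$ vanishes exactly off $S$, then handle a general $\theta$ via membership in the closed span together with reality against each $f_{\theta_n}$) is sound. Two steps deserve explicit justification in a full write-up: first, that maximal abelianness of $\N_{p'}$ really does force every measurable subset of $X$ to agree, modulo $\mu$-null sets, with some $\tau^{-1}(E)$ --- this is what lets you pass from \eqref{rzecz1}, which only controls integrals over sets of the form $\tau^{-1}(E)$, to $\operatorname{Im}\bigl(f_{\theta'}\overline{f_{\theta''}}\bigr)=0$ a.e., and what lets every real measurable $h_\theta$ (bounded or not) factor as $\Phi_\theta\circ\tau$ with $\Phi_\theta$ Borel; second, the domain verification $\int|\Phi_\theta\circ\tau|^2|g|^2\,d\mu=\|f_\theta\|_2^2<\infty$, which guarantees that $\Phi_\theta(T)\chi$ is actually defined.
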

\begin{theorem}[discrete case]\label{M2}
Let $T$ be a quantum statistic on a separable Hilbert space
$\mathcal{H}$ with the spectral decomposition
\begin{equation}\label{drs}
 T=\sum_k\lambda_ke_k,
\end{equation}
where $\{e_k\}$ is a countable partition of the identity, and
$\lambda_k\in\mathbb{R}$ are different eigenvalues of $T$,  and let
$\{\varphi_{\theta}:\theta\in\Theta\}$ be a family of vector states in $\mathcal{H}$. The following conditions are equivalent:
\begin{enumerate}
 \item[(i)] $T$ is weakly sufficient for $\{\varphi_{\theta}:\theta\in
 \Theta\}$.
 \item[(ii)]For each $k,\: \dim[\{e_k\varphi_{\theta}:\theta\in\Theta\}]
 =0\text{ or }1$, and for each $\theta\in\Theta$ there is a version
 $\widetilde{\varphi}_{\theta}\text{ of }\varphi_{\theta}$ such that
  for each $k$ and any\\  $\theta',\theta''\in\Theta,$ we have
 $\langle e_k\widetilde{\varphi}_{\theta'},\widetilde{\varphi}_{\theta''}
 \rangle\in\mathbb{R}$.
\end{enumerate}
\end{theorem}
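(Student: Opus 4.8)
The plan is to deduce Theorem \ref{M2} from the general criterion of Theorem \ref{M1} by making the structure of $\mathcal{N}$ and $\mathcal{N}'$ completely explicit in the discrete case. Since condition (i) is literally the same statement in both theorems, it suffices to show that, when $T=\sum_k\lambda_k e_k$, condition (ii) of Theorem \ref{M1} is equivalent to condition (ii) of Theorem \ref{M2}. First I would record the relevant algebras: here $\mathcal{N}=\{\sum_k c_k e_k:(c_k)\text{ bounded}\}$, and because the $e_k$ form a partition of the identity, an operator commutes with $\mathcal{N}$ precisely when it leaves each subspace $e_k\mathcal{H}$ invariant, so that $\mathcal{N}'=\bigoplus_k\mathbb{B}(e_k\mathcal{H})$. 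Consequently any projection $p'\in\mathcal{N}'$ splits as $p'=\sum_k p_k'$, where $p_k'\leq e_k$ is a projection onto a subspace of $e_k\mathcal{H}$.

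Next I would translate the two clauses of Theorem \ref{M1}(ii). Restricting $xp'=\sum_k c_k p_k'$ to $p'\mathcal{H}=\bigoplus_k p_k'\mathcal{H}$ shows that $\mathcal{N}_{p'}$ is exactly the algebra of ``block scalars'' $\bigoplus_k c_k\,\mathrm{id}_{p_k'\mathcal{H}}$ (here one uses that the $\lambda_k$ are distinct, so $(c_k)$ ranges over all bounded sequences); its commutant in $\mathbb{B}(p'\mathcal{H})$ is $\bigoplus_k\mathbb{B}(p_k'\mathcal{H})$, whence $\mathcal{N}_{p'}$ is maximal abelian if and only if $\dim p_k'\mathcal{H}\leq1$ for every $k$. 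Under this constraint the requirement $\{\varphi_\theta\}\subset p'\mathcal{H}$ forces each $e_k\varphi_\theta=p_k'\varphi_\theta$ into the at-most-one-dimensional space $p_k'\mathcal{H}$, which is precisely $\dim[\{e_k\varphi_\theta:\theta\in\Theta\}]\in\{0,1\}$; conversely, taking $p_k'$ to be the projection onto $[\{e_k\varphi_\theta:\theta\in\Theta\}]$ recovers an admissible $p'$ with $\{\varphi_\theta\}\subset p'\mathcal{H}$. Finally, since $e(E)=\sum_{\lambda_k\in E}e_k$ strongly and $\langle e(E)\widetilde\varphi_{\theta'},\widetilde\varphi_{\theta''}\rangle=\sum_{\lambda_k\in E}\langle e_k\widetilde\varphi_{\theta'},\widetilde\varphi_{\theta''}\rangle$, condition \eqref{rzecz1} holds for all Borel $E$ iff $\langle e_k\widetilde\varphi_{\theta'},\widetilde\varphi_{\theta''}\rangle\in\mathbb{R}$ for every $k$ (one direction by summation, the other by taking $E=\{\lambda_k\}$). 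Since the dimension and subspace conditions are insensitive to the choice of versions while only the reality condition constrains them, these equivalences match Theorem \ref{M1}(ii) with Theorem \ref{M2}(ii) term by term, and the theorem follows.

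I expect the only delicate point to be the identification of the induced algebra $\mathcal{N}_{p'}$ and the passage from ``maximal abelian'' to the one-dimensionality of the blocks $p_k'\mathcal{H}$; everything else is bookkeeping with the orthogonal decomposition $\mathcal{H}=\bigoplus_k e_k\mathcal{H}$. As an alternative that avoids invoking Theorem \ref{M1}, I would argue directly: weak sufficiency gives $e_k\widetilde\varphi_\theta=\Phi_\theta(\lambda_k)\,e_k\chi$ with $\Phi_\theta(\lambda_k)\in\mathbb{R}$, which immediately yields both clauses of (ii); conversely, the reality hypothesis shows that for each $k$ all the vectors $e_k\widetilde\varphi_\theta$ are real multiples of a single unit vector $\eta_k$, so that setting $e_k\chi=c_k\eta_k$ for a square-summable choice of $c_k>0$ produces a genuine vector $\chi\in\mathcal{H}$ and real Borel functions $\Phi_\theta$ with $\Phi_\theta(T)\chi=\widetilde\varphi_\theta$.
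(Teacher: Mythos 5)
The paper never proves Theorem \ref{M2}: it is imported verbatim from \cite{LLP} (``For later use we state here two main results from \cite{LLP}''), so there is no in-paper proof to measure your argument against. Judged on its own, your proof is correct, and both of your routes work. The reduction to Theorem \ref{M1} is sound: the identifications $\mathcal{N}=\{\sum_k c_ke_k\}$, $\mathcal{N}'=\bigoplus_k\mathbb{B}(e_k\mathcal{H})$, $p'=\sum_k p'_k$ with $p'_k\leq e_k$, and the equivalence ``$\mathcal{N}_{p'}$ maximal abelian $\Leftrightarrow\dim p'_k\mathcal{H}\leq1$ for all $k$'' are all right, and your two translation steps (choosing $p'_k$ to project onto $[\{e_k\varphi_\theta:\theta\in\Theta\}]$ in one direction, and passing between \eqref{rzecz1} for all Borel $E$ and reality of each $\langle e_k\widetilde\varphi_{\theta'},\widetilde\varphi_{\theta''}\rangle$ via $E=\{\lambda_k\}$) match the two clauses of (ii) exactly, with the correct observation that the dimension clause is version-independent.

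Two small points deserve a line each in a written-up version. First, in the summation direction you should note that $\sum_k|\langle e_k\widetilde\varphi_{\theta'},\widetilde\varphi_{\theta''}\rangle|\leq\|\widetilde\varphi_{\theta'}\|\,\|\widetilde\varphi_{\theta''}\|$ by Cauchy--Schwarz, so the subseries over $\{k:\lambda_k\in E\}$ of real terms has a real sum. Second, in your direct converse you should check that the common unit vector $\eta_k$ with \emph{real} coefficients exists (writing $e_k\widetilde\varphi_\theta=\gamma_k(\theta)\eta_k$, the hypothesis gives $\gamma_k(\theta')\overline{\gamma_k(\theta'')}\in\mathbb{R}$, which forces all nonzero $\gamma_k(\theta)$ to be real multiples of a fixed one after rotating $\eta_k$), and that $\chi=\sum_kc_k\eta_k$ lies in the domain of the generally unbounded operator $\Phi_\theta(T)$, i.e. $\sum_k|\Phi_\theta(\lambda_k)|^2c_k^2=\sum_k\|e_k\widetilde\varphi_\theta\|^2<\infty$. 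Your self-contained second argument is in fact the one closest in spirit to how the present paper uses Theorem \ref{M2}: the decomposition $e_k\varphi_\theta=\gamma_k(\theta)\xi_k$ is exactly the reformulation stated before Proposition \ref{P1}, and the construction of $\chi$ from a square-summable sequence mirrors the proof of Theorem \ref{T1}.
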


\section{Existence and minimality of weakly sufficient quantum statistics}
We begin our analysis from the problem of the existence of a sufficient quantum statistic. The theorem below was proved in \cite{C} for a finite number of states.
\begin{theorem}\label{T1}
Let $\{\varphi_{\theta}:\theta\in\Theta\}$ be an arbitrary family of vector states. There exists a weakly sufficient statistic for this family if and only if for each $\theta\in\Theta$ there is a version $\widetilde{\varphi}_{\theta}$ of $\varphi_{\theta}$ such that for any $\theta',\theta''\in\Theta$
\begin{equation}\label{real}
 \langle\widetilde{\varphi}_{\theta'},\widetilde{\varphi}_{\theta''}\rangle
 \in\mathbb{R}.
\end{equation}
\end{theorem}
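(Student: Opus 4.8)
The plan is to prove the two implications separately, the forward one being immediate and the converse resting on a single construction.

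For the ``only if'' part I would argue as follows. If some statistic $T$, with spectral measure $e$, is weakly sufficient, then Theorem~\ref{M1} already supplies versions $\widetilde{\f}_{\theta}$ satisfying \eqref{rzecz1} for every Borel set $E$; taking $E=\mathbb{R}$, where $e(\mathbb{R})=\boldsymbol{1}$, yields \eqref{real} at once. The same conclusion also follows by hand: writing $\widetilde{\f}_{\theta}=\Phi_{\theta}(T)\chi$ with $\Phi_{\theta}$ real-valued, the operators $\Phi_{\theta}(T)$ are selfadjoint and mutually commuting, so that $\langle\widetilde{\f}_{\theta'},\widetilde{\f}_{\theta''}\rangle=\langle(\Phi_{\theta''}\Phi_{\theta'})(T)\chi,\chi\rangle$ is the expectation of a selfadjoint operator in the state $\chi$, hence real.

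The ``if'' direction is where the work lies, and its heart is to convert the numerical hypothesis \eqref{real} into a genuine conjugation. Assume \eqref{real} holds for versions $\widetilde{\f}_{\theta}$, and let $\mathcal{K}$ be the closure of the \emph{real}-linear span of $\{\widetilde{\f}_{\theta}:\theta\in\Theta\}$. Since every pairwise inner product is real, the inner product is real-valued on $\mathcal{K}\times\mathcal{K}$ by continuity, whence $\operatorname{Re}\langle u,iv\rangle=0$ for $u,v\in\mathcal{K}$; thus $\mathcal{K}$ and $i\mathcal{K}$ are orthogonal in the real inner product $\operatorname{Re}\langle\cdot,\cdot\rangle$. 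I would then check that $\Hii:=[\{\widetilde{\f}_{\theta}:\theta\in\Theta\}]$ decomposes as the real-orthogonal direct sum $\mathcal{K}\oplus i\mathcal{K}$, so that $J(u+iv)=u-iv$ defines a conjugation (an antilinear isometric involution) on $\Hii$ fixing each $\widetilde{\f}_{\theta}$. Extending $J$ by an arbitrary conjugation on $\Hii^{\bot}$ produces a conjugation $\widehat{J}$ on the whole of $\Hi$.

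It then remains to manufacture a statistic adapted to $\widehat{J}$. Because $\widehat{J}$ is a conjugation, $\Hi$ admits an orthonormal basis $\{e_n\}$ with $\widehat{J}e_n=e_n$, and I would set $T=\sum_n\la_n P_{[e_n]}$ for a bounded sequence of \emph{distinct} reals $\la_n$. The algebra $\N$ generated by $T$ is then the diagonal (hence maximal abelian) algebra for $\{e_n\}$, and each spectral projection $e(E)=\sum_{\la_n\in E}P_{[e_n]}$ commutes with $\widehat{J}$; since $\widehat{J}$ commutes with $e(E)$ and fixes every $\widetilde{\f}_{\theta}$, it fixes both $e(E)\widetilde{\f}_{\theta'}$ and $\widetilde{\f}_{\theta''}$, so the conjugation identity $\langle a,b\rangle=\overline{\langle\widehat{J}a,\widehat{J}b\rangle}$ forces $\langle e(E)\widetilde{\f}_{\theta'},\widetilde{\f}_{\theta''}\rangle$ to be real for every $E$. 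Taking $p'=\boldsymbol{1}$, all the requirements of condition~(ii) of Theorem~\ref{M1} are met, and that theorem delivers the weak sufficiency of $T$. One could equally bypass Theorem~\ref{M1} and finish directly: expand $\widetilde{\f}_{\theta}=\sum_n c^{\theta}_n e_n$ with $c^{\theta}_n\in\mathbb{R}$, fix $\chi=\sum_n d_n e_n$ with $d_n\in\mathbb{R}\setminus\{0\}$ and $\sum_n d_n^2<\infty$, and put $\Phi_{\theta}(\la_n)=c^{\theta}_n/d_n$, which is legitimate precisely because the $\la_n$ are distinct. The only genuinely delicate point is the decomposition $\Hii=\mathcal{K}\oplus i\mathcal{K}$ underlying the construction of $J$; everything after it is routine.
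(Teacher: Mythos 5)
Your proof is correct, and for the substantive (``if'') direction it takes a genuinely different route from the paper's. The ``only if'' half is exactly the paper's argument (apply Theorem~\ref{M1} and put $E=\mathbb{R}$), and your alternative direct computation, reading $\langle\widetilde{\f}_{\theta'},\widetilde{\f}_{\theta''}\rangle$ as $\int\Phi_{\theta'}(\la)\Phi_{\theta''}(\la)\,\|e(d\la)\chi\|^2$ with real integrand, is also sound. For the converse, the paper extracts a countable linearly independent subfamily $\{\widetilde{\f}_{\theta_n}\}$ spanning $[\{\widetilde{\f}_{\theta}:\theta\in\Theta\}]$, Gram--Schmidt orthogonalizes it, notes that \eqref{real} forces the coefficients $\gamma_j^{(n)}$ to be real, sets $T=\sum_n\la_nP_{[\xi_n]}$, and checks condition (ii) of the discrete Theorem~\ref{M2} by expanding $\langle P_{[\xi_n]}\widetilde{\f}_{\theta'},\widetilde{\f}_{\theta''}\rangle$. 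You instead isolate the structural content of \eqref{real}: the closed real span $\mathcal{K}$ of the versions is a real form of $[\{\widetilde{\f}_{\theta}:\theta\in\Theta\}]$, which yields a conjugation fixing every $\widetilde{\f}_{\theta}$, after which any diagonal statistic with distinct eigenvalues in a $\widehat{J}$-real orthonormal basis works, the reality of $\langle e(E)\widetilde{\f}_{\theta'},\widetilde{\f}_{\theta''}\rangle$ coming for free from the conjugation identity. Both constructions ultimately produce a statistic with rank-one spectral projections in a basis with respect to which the versions have real coordinates; the paper's Gram--Schmidt is the more hands-on way of exhibiting such a basis, while your conjugation argument explains why the construction works, automatically yields a partition of the identity on all of $\Hi$ (a point the paper's $\sum_nP_{[\xi_n]}$ leaves slightly implicit when the versions do not span $\Hi$), and via the explicit factorization $\Phi_{\theta}(\la_n)=c^{\theta}_n/d_n$ gives a self-contained finish that bypasses Theorems~\ref{M1} and~\ref{M2} entirely. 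The step you flagged as delicate does go through: $\mathcal{K}$ and $i\mathcal{K}$ are closed and orthogonal for $\operatorname{Re}\langle\cdot,\cdot\rangle$, so their sum is closed, and being a complex subspace containing all the $\widetilde{\f}_{\theta}$ it must equal $[\{\widetilde{\f}_{\theta}:\theta\in\Theta\}]$.
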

\begin{proof}
Assume that for each $\theta$ we have a version $\widetilde{\varphi}_{\theta}$ of $\varphi_{\theta}$ such that condition \eqref{real} holds. Since $\mathcal{H}$ is separable there is a countable subset $\{\widetilde{\varphi}_{\theta_n}:n=1,2,\dots\}\text{ of
}\{\widetilde{\varphi}_{\theta}:\theta\in\Theta\}$ such that
\[
 [\{\widetilde{\varphi}_{\theta_n}:n=1,2,\dots\}]
 =[\{\widetilde{\varphi}_{\theta}:\theta\in\Theta\}].
\]
Certainly, we may assume that $\{\widetilde{\varphi}_{\theta_n}\}$ are linearly independent. Let $\{\xi_n\}$ be the Gramm-Schmidt orthogonalization of $\{\widetilde{\varphi}_{\theta_n}\}$. Then
\[
 \xi_n=\sum_{j=1}^n \gamma_j^{(n)}\widetilde{\varphi}_{\theta_j},
\]
and condition \eqref{real} implies that $\gamma_j^{(n)}$ are real for all $j=1,\dots,n,\\n=1,2,\dots$. Put
\[
 T=\sum_{n=1}^{\infty}\la_nP_{[\xi_n]},
\]
where $\la_n$ are arbitrary different real numbers. We have
\[
 \dim[\{P_{[\xi_n]}\widetilde{\varphi}_{\theta}:\theta\in\Theta\}]=0
 \text{ or }1,
\]
since $P_{[\xi_n]}$ are one-dimensional projections, and for each $\theta',\theta''\in\Theta$
\begin{align*}
 &\langle P_{[\xi_n]}\widetilde{\varphi}_{\theta'},
 \widetilde{\varphi}_{\theta''}\rangle=
 \langle\widetilde{\varphi}_{\theta'},\xi_n\rangle\langle\xi_n,
 \widetilde{\varphi}_{\theta''}\rangle\\=&\Big\langle\widetilde{\varphi}_{\theta'},
 \sum_{j=1}^n \gamma_j^{(n)}\widetilde{\varphi}_{\theta_j}\Big\rangle
 \Big\langle\sum_{k=1}^n \gamma_k^{(n)}\widetilde{\varphi}_{\theta_k},
 \widetilde{\varphi}_{\theta''}\Big\rangle\\=&\sum_{j=1}^n\sum_{k=1}^n\gamma_j^{(n)}
 \gamma_k^{(n)}\langle\widetilde{\varphi}_{\theta'},\widetilde{\varphi}_{\theta_j}
 \rangle\langle\widetilde{\varphi}_{\theta_k},\widetilde{\varphi}_{\theta''}\rangle
 \in\mathbb{R},
\end{align*}
on account of condition \eqref{real}. From Theorem \ref{M2} we obtain that $T$ is weakly sufficient for $\{\varphi_{\theta}:\theta\in\Theta\}$.

Let now $T$ be a weakly sufficient statistic for $\{\varphi_{\theta}:\theta\in\Theta\}$. From Theorem \ref{M1} we get that for each $\theta\in\Theta$ there is a version $\widetilde{\varphi}_{\theta}$ of $\varphi_{\theta}$ such that for any Borel set $E\subset\mathbb{R}$ and any $\theta',\theta''\in\Theta$ we have
\[
 \langle e(E)\widetilde{\varphi}_{\theta'},
 \widetilde{\varphi}_{\theta''}\rangle\in\mathbb{R},
\]
where $e$ is the spectral measure of $T$. Putting $E=\mathbb{R}$ we obtain condition \eqref{real}.
\end{proof}
Now we want to address the question of minimality of a weakly sufficient statistic. Let us recall that in the classical case a sufficient statistic $S$ is said to be minimal if it is a function of any other sufficient statistic. In our situation this definition is too general and doesn't make much sense. Indeed, Theorem \ref{T1} shows that for a given family of states $\{\varphi_{\theta}:\theta\in\Theta\}$ satisfying condition \eqref{real} there are many weakly sufficient statistics having different spectral decompositions, while the classical definition of minimality of $S$ would require that $S=\Phi(T)$ for any weakly sufficient statistic $T$, i.e., roughly speaking, the spectral measure of $S$ should be a function of the spectral measure of $T$. However, it makes sense to speak about minimality with respect to the algebra $\mathcal{N}$ generated by a weakly sufficient statistic $T$. In the following definition the phrase ``weakly sufficient'' means ``weakly sufficient for a given family of vector states $\{\varphi_{\theta}:\theta\in\Theta\}$''.
\begin{definition}
Let $T$ be a weakly sufficient statistic. A weakly sufficient statistic $S$ is said to be \emph{minimal with respect to} $T$ if for any weakly sufficient statistic $U$ affiliated with the von Neumann algebra $\mathcal{N}$ generated by $T$, there is a real-valued Borel function $\Psi$ such that $S=\Psi(U)$.
\end{definition}
In what follows we shall investigate in detail the question of the existence of a minimal statistic with respect to a given weakly sufficient statistic $T$ in the case when $T$ has discrete spectral decomposition. Thus assume that $T$ is given by equation \eqref{drs}, where $\{e_k\}$ is a countable partition of the identity, and
$\lambda_k\in\mathbb{R}$ are different eigenvalues of $T$. To fix attention, assume that $k$ runs over all positive integers $\mathbb{N}$.  Let $\Phi$ be an arbitrary Borel function. Then
\[
 \Phi(T)=\sum_{k=1}^{\infty}\Phi(\lambda_k)e_k.
\]
The set $\mathbb{N}$ can be divided into nonempty disjoint sets $\mathbb{I}_k$ such that for $j,m\in\mathbb{I}_k$ we have
\[
 \Phi(\lambda_j)=\Phi(\lambda_m).
\]
Putting
\[
 \beta_k=\Phi(\lambda_j)\quad\text{for}\quad j\in\mathbb{I}_k,
\]
\begin{equation}\label{f}
 f_k=\sum_{j\in\mathbb{I}_k}e_j,
\end{equation}
we obtain the representation
\begin{equation}\label{fcjaT}
 \Phi(T)=\sum_k\beta_kf_k,
\end{equation}
where $\{f_k\}$ is a partition of the identity, and $\beta_k$ are different eigenvalues of $\Phi(T)$ ($\Phi(T)$ is often called \emph{coarse-graining} of $T$).

Let $T$ as above be a weakly sufficient statistic for a family of vector states $\{\varphi_{\theta}:\theta\in\Theta\}$. Condition (i) of Theorem \ref{M2} can be rewritten in the form
\begin{enumerate}
 \item[(i${}'$)] for each $k$ there exist a function $\theta\mapsto\gamma_k(\theta)$ and a unit vector $\xi_k$ such that
     \[
      e_k\varphi_{\theta}=\gamma_k(\theta)\xi_k\qquad\text{for all}\quad\theta\in\Theta.
     \]
 \end{enumerate}
Moreover, since $e_k$ are orthogonal we have $\xi_j\perp\xi_m$ for $j\ne m$.

With the help of condition (i${}'$) we define in $\mathbb{N}$ an equivalence relation $\sim$\, as follows:
\[
 j\sim m \quad\text{if there exists }\beta\ne0\;\text{such that }
 \gamma_j(\theta)=\beta\gamma_m(\theta)\text{ for all }\theta\in\Theta.
\]
The proposition below characterizes weakly sufficient statistics affiliated with the von Neumann algebra $\mathcal{N}$.
\begin{proposition}\label{P1}
Let $\Phi$ be a real-valued function, and let $U=\Phi(T)$ has form \eqref{fcjaT}. The following conditions are equivalent
\begin{enumerate}
 \item[(i)] $U$ is weakly sufficient
 \item[(ii)] for each $k\in\mathbb{N}$ and for any $j,m\in\mathbb{I}_k$ such that $\gamma_j$ and $\gamma_m$, as defined before, are non-zero functions, we have $j\sim m$.
\end{enumerate}
\end{proposition}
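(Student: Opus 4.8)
The plan is to apply Theorem \ref{M2} to $U$ itself. Since $U=\sum_k\beta_kf_k$ with the $\beta_k$ distinct and $\{f_k\}$ a partition of the identity, this is a genuine discrete spectral decomposition, so Theorem \ref{M2} applies: $U$ is weakly sufficient if and only if there are versions $\widetilde{\varphi}_\theta$ such that for every $k$ one has $\dim[\{f_k\widetilde{\varphi}_\theta:\theta\in\Theta\}]\leq 1$ and $\langle f_k\widetilde{\varphi}_{\theta'},\widetilde{\varphi}_{\theta''}\rangle\in\mathbb{R}$ for all $\theta',\theta''$. The whole point is that, $U$ being a coarse-graining of $T$, the projection $f_k$ is a sub-sum $\sum_{j\in\mathbb{I}_k}e_j$ of the original spectral projections, and this makes the analysis reduce to information already available from the weak sufficiency of $T$.

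First I would observe that the reality condition for $U$ is automatic. Taking the versions $\widetilde{\varphi}_\theta$ that witness the weak sufficiency of $T$, linearity gives
\[
 \langle f_k\widetilde{\varphi}_{\theta'},\widetilde{\varphi}_{\theta''}\rangle
 =\sum_{j\in\mathbb{I}_k}\langle e_j\widetilde{\varphi}_{\theta'},\widetilde{\varphi}_{\theta''}\rangle,
\]
and each summand is real by condition (ii) of Theorem \ref{M2} applied to $T$; hence so is the sum. Thus the same versions may be used for $U$, and since the dimension of $[\{f_k\widetilde{\varphi}_\theta\}]$ is unaffected by unimodular rescalings of the individual vectors, the weak sufficiency of $U$ is equivalent to having $\dim[\{f_k\widetilde{\varphi}_\theta:\theta\in\Theta\}]\leq 1$ for every $k$.

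It then remains to match this dimension condition with condition (ii) of the Proposition, using (i${}'$) in the form $f_k\widetilde{\varphi}_\theta=\sum_{j\in\mathbb{I}_k}\gamma_j(\theta)\xi_j$ with the $\xi_j$ mutually orthogonal. For (i)$\Rightarrow$(ii): given $j,m\in\mathbb{I}_k$ with $\gamma_j,\gamma_m\not\equiv 0$, choose $\theta_0$ with $\gamma_j(\theta_0)\neq 0$, so $f_k\widetilde{\varphi}_{\theta_0}\neq 0$; one-dimensionality forces $f_k\widetilde{\varphi}_\theta=c(\theta)\,f_k\widetilde{\varphi}_{\theta_0}$, and reading off the $\xi_j$- and $\xi_m$-coordinates gives $\gamma_m=\beta\gamma_j$ with $\beta=\gamma_m(\theta_0)/\gamma_j(\theta_0)$, so that $j\sim m$. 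For (ii)$\Rightarrow$(i): fix an index $j_0\in\mathbb{I}_k$ with $\gamma_{j_0}\not\equiv 0$ (if none exists then $f_k\widetilde{\varphi}_\theta\equiv 0$ and the dimension is $0$); condition (ii) yields $\gamma_j=\beta_j\gamma_{j_0}$ with $\beta_j\neq 0$ for every contributing $j$, whence $f_k\widetilde{\varphi}_\theta=\gamma_{j_0}(\theta)\,\eta_k$ for the fixed vector $\eta_k=\sum_{j}\beta_j\xi_j$, and the span is at most one-dimensional.

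The step needing the most care is the handling of the identically-zero functions $\gamma_j$: such indices contribute nothing to $f_k\widetilde{\varphi}_\theta$ and do not enter the relation $\sim$, which is precisely why condition (ii) is stated only for pairs with $\gamma_j,\gamma_m$ non-zero. In particular one must check that the constant $\beta$ in the direction (i)$\Rightarrow$(ii) is genuinely non-zero; this follows since $\gamma_m(\theta_0)=0$ together with $\gamma_m(\theta)=c(\theta)\gamma_m(\theta_0)$ would force $\gamma_m\equiv 0$, contradicting the choice of $m$. With this verified, both implications close and the equivalence is established.
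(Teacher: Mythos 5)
Your proof is correct and follows essentially the same route as the paper's: apply Theorem \ref{M2} to the coarse-graining $U=\sum_k\beta_kf_k$, note that the reality condition $\langle f_k\widetilde{\varphi}_{\theta'},\widetilde{\varphi}_{\theta''}\rangle\in\mathbb{R}$ is inherited by summation from the weak sufficiency of $T$, and translate the one-dimensionality of $[\{f_k\varphi_{\theta}\}]$ into the equivalence $j\sim m$ of the contributing indices (the paper writes $f_k\varphi_{\theta}=\gamma(\theta)\eta$ and compares coefficients against $\langle\eta,\xi_j\rangle$, which is the same computation as your $c(\theta)$ and $\theta_0$ argument). Your explicit check that $\beta\neq 0$ is a welcome detail the paper handles implicitly.
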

\begin{proof}
First, notice that for each $\theta\in\Theta$ we have
\[
 f_k\varphi_{\theta}=\sum_{j\in\mathbb{I}_k}e_j\varphi_{\theta}= \sum_{j\in\mathbb{I}_k}\gamma_j(\theta)\xi_j.
\]

Assume that (i) holds. Then according to Theorem \ref{M2} for each $k$
\[
 \dim[\{f_k\varphi_{\theta}:\theta\in\Theta\}]=0\text{ or }1,
\]
i.e., for each fixed $k$ we have
\[
 f_k\varphi_{\theta}=\sum_{j\in\mathbb{I}_k}\gamma_j(\theta)\xi_j
 =\gamma(\theta)\eta,
\]
for some function $\gamma$ and a unit vector $\eta$. Consequently, for $j,m\in\mathbb{I}_k$,
\[
 \gamma_j(\theta)=\gamma(\theta)\langle\eta,\xi_j\rangle,\qquad
 \gamma_m(\theta)=\gamma(\theta)\langle\eta,\xi_m\rangle.
\]
If the functions $\gamma_j,\gamma_m$ are non-zero then
\[
 \langle\eta,\xi_j\rangle\ne0,\qquad\langle\eta,\xi_m\rangle\ne0,
\]
and we obtain
\[
 \gamma_j(\theta)=\frac{\langle\eta,\xi_j\rangle}{\langle\eta,\xi_m\rangle}
 \gamma_m(\theta),
\]
showing that $j\sim m$.

Now let (ii) hold. Then for each $k$
\[
 f_k\varphi_{\theta}=\sum_j\gamma_j(\theta)\xi_j,
\]
where the sum is taken over these $j$ in $\mathbb{I}_k$ for which $\gamma_j$ is non-zero. From the equivalence of all these $j$ we obtain that there exist $j_0$ and numbers $\beta_j\ne0$ such that for each $j$ as above we have
\[
 \gamma_j(\theta)=\beta_j\gamma_{j_0}(\theta).
\]
Consequently, for each $\theta\in\Theta$
\[
 f_k\varphi_{\theta}=\sum_j\gamma_j(\theta)\xi_j=
 \sum_j\beta_j\gamma_{j_0}(\theta)\xi_j=
 \gamma_{j_0}(\theta)\sum_j\beta_j\xi_j=\gamma_{j_0}(\theta)\eta,
\]
where
\[
 \eta=\sum_j\beta_j\xi_j,
\]
showing that
\[
 \dim[\{f_k\varphi_{\theta}:\theta\in\Theta\}]=0\text{ or }1.
\]
Furthermore, since $T$ is weakly sufficient we have for each $k$
\[
 \langle e_k\widetilde{\varphi}_{\theta'},\widetilde{\varphi}_{\theta''}
 \rangle\in\mathbb{R}
\]
for some version $\widetilde{\varphi}_{\theta}$ of $\varphi_{\theta}$ and all $\theta',\theta''\in\Theta$, thus for this version
\[
 \langle f_k\widetilde{\varphi}_{\theta'},\widetilde{\varphi}_{\theta''}\rangle
 =\sum_{j\in\mathbb{I}_k}\langle e_j\widetilde{\varphi}_{\theta'},\widetilde{\varphi}_{\theta''}\rangle
 \in\mathbb{R},
\]
for each $k$ and all $\theta',\theta''\in\Theta$, which proves that $U$ is weakly sufficient.
\end{proof}
Now we can give a solution to the problem of the existence of a minimal weakly sufficient statistic with respect to a given weakly sufficient discrete statistic. Let us agree to call a family of vector states $\{\varphi_{\theta}:\theta\in\Theta\}$ \emph{nontrivial} if $\dim[\{\varphi_{\theta}:\theta\in\Theta\}]\geq2$.
\begin{theorem}
Let $T$ be a discrete statistic weakly sufficient for a nontrivial family of vector states $\{\varphi_{\theta}:\theta\in\Theta\}$, and assume that $T$ has form \eqref{drs}. There exists a weakly sufficient statistic minimal with respect to $T$ if and only if for each $k$
\begin{equation}\label{z}
 \dim[\{e_k\varphi_{\theta}:\theta\in\Theta\}]=1.
\end{equation}
\end{theorem}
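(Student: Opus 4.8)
The plan is to recast the problem in terms of the partitions of $\mathbb{N}$ underlying the coarse-grainings of $T$, with Proposition \ref{P1} as the main tool. Recall from condition (i$'$) that weak sufficiency of $T$ produces vectors $\xi_k$ and scalar functions $\gamma_k$ with $e_k\varphi_\theta=\gamma_k(\theta)\xi_k$, and that $\sim$ denotes proportionality of these functions. Proposition \ref{P1} then says that a coarse-graining with blocks $\{\mathbb{I}_k\}$ is weakly sufficient exactly when, inside each block, all indices carrying a non-zero $\gamma$ lie in a single $\sim$-class. I would first observe that $S$ is minimal with respect to $T$ if and only if (taking $U=T$ in the definition, which is legitimate since $T$ is weakly sufficient and affiliated with $\mathcal{N}$) $S$ is itself a coarse-graining $\sum_k\nu_k h_k$ of $T$ whose associated partition is coarser than the partition of every weakly sufficient coarse-graining of $T$. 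Thus the existence of a minimal statistic is equivalent to the existence of a coarsest weakly sufficient partition, i.e.\ one refined by all the others.

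For the sufficiency of \eqref{z}, I would note that \eqref{z} says precisely that no $\gamma_k$ vanishes identically, so every index carries a non-zero $\gamma$. Letting $C_1,C_2,\dots$ be the $\sim$-equivalence classes of $\mathbb{N}$ and choosing distinct reals $\mu_i$, I would set $S=\sum_i\mu_i g_i$ with $g_i=\sum_{j\in C_i}e_j$. By Proposition \ref{P1} this $S$ is weakly sufficient, since each block is a single $\sim$-class. For minimality, any weakly sufficient $U$ has, by Proposition \ref{P1}, each of its blocks contained in one $C_i$ — here \eqref{z} is essential, since with all indices non-zero ``pairwise equivalent'' forces the whole block into a single class; hence the partition of $S$ is coarser than that of $U$, and $S=\Psi(U)$ for a suitable real-valued Borel $\Psi$.

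For the necessity of \eqref{z}, I would argue by contradiction: suppose a minimal $S$ exists but \eqref{z} fails, so there is $k_0$ with $e_{k_0}\varphi_\theta=0$ for all $\theta$, i.e.\ $\gamma_{k_0}\equiv0$; as above $S$ is a coarse-graining of $T$. The guiding idea is that a vanishing $\gamma_{k_0}$ lets $k_0$ be attached to any block without destroying weak sufficiency, which I exploit to manufacture a conflict. Using nontriviality, I would first show there are at least two distinct non-zero $\sim$-classes (otherwise all $\varphi_\theta$ would be proportional to a single vector, forcing $\dim[\{\varphi_\theta\}]\leq1$); pick $k_1,k_2$ in two such classes, so $\gamma_{k_1},\gamma_{k_2}$ are non-zero and $k_1\not\sim k_2$. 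Let $U_1,U_2$ be the coarse-grainings of $T$ obtained by merging $\{k_0,k_1\}$, respectively $\{k_0,k_2\}$, into one block and leaving all other indices as singletons; Proposition \ref{P1} shows both are weakly sufficient, since in each the only non-zero index in the merged block is $k_1$ (resp.\ $k_2$). Minimality forces $S=\Psi_1(U_1)$ and $S=\Psi_2(U_2)$, so the partition of $S$ is coarser than both; consequently $k_0,k_1$ share a block of $S$ and $k_0,k_2$ share a block of $S$, whence $k_1$ and $k_2$ share a block of $S$. But then one block of $S$ contains the two non-equivalent non-zero indices $k_1,k_2$, so by Proposition \ref{P1} $S$ is not weakly sufficient, contradicting minimality.

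The main obstacle I anticipate is the necessity direction, and within it the step converting nontriviality into the existence of two distinct non-zero $\sim$-classes: this is what guarantees a genuine conflict rather than a single admissible home for $k_0$. Once that is in hand, checking that the auxiliary statistics $U_1,U_2$ are weakly sufficient and that minimality transfers the two mergers into one forbidden block of $S$ is a direct application of Proposition \ref{P1} together with the transitivity of ``lying in the same block''.
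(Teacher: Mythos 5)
Your argument is correct and takes essentially the same route as the paper: the sufficiency direction (taking the blocks of $S$ to be the $\sim$-classes and using Proposition \ref{P1} to see that the partition of every weakly sufficient coarse-graining of $T$ refines this one, so that $S=\Psi(U)$) is identical, and the necessity direction rests on the same idea of attaching the null index to different blocks and letting minimality force illegitimate identifications in the level sets of $S$. The only difference is cosmetic: the paper merges $e_1$ with \emph{every} $e_n$ and concludes $S$ is a scalar multiple of the identity, contradicting weak sufficiency for a nontrivial family, whereas you merge $k_0$ with representatives of two distinct non-zero $\sim$-classes (which exist by nontriviality) and contradict Proposition \ref{P1} directly; both versions are sound.
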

\begin{proof}
Assume first that condition \eqref{z} holds. The equivalence relation $\sim$ divides the set $\mathbb{N}$ into nonempty pairwise disjoint sets $\mathbb{J}_m$ consisting of all mutually equivalent elements. Put
\[
 q_m=\sum_{j\in\mathbb{J}_m}e_j.
\]
Let $\varepsilon_m$ be arbitrary different real numbers, and let
\[
 S=\sum_m\varepsilon_m q_m.
\]
We shall show that $S$ is minimal. It is easily seen that $S$ is weakly sufficient. Let $U$ be an arbitrary weakly sufficient statistic affiliated with $\mathcal{N}$. Then $U=\Phi(T)$ for some real-valued Borel function $\Phi$, thus we may assume that $U$ has form \eqref{fcjaT}, with $f_k$ given by equation \eqref{f}. From assumption \eqref{z} and Proposition \ref{P1} it follows that for each $k$ all elements in $\mathbb{I}_k$ are equivalent, thus for each $k$ there is an $m$ such that $\mathbb{I}_k\subset\mathbb{J}_m$, so each $\mathbb{J}_m$ is a sum of some $\mathbb{I}_k$'s. Since
\[
 \bigcup_k\mathbb{I}_k=\bigcup_m\mathbb{J}_m=\mathbb{N},
\]
we obtain a partition of $\mathbb{N}$,
\[
 \mathbb{N}=\bigcup_m\mathbb{N}_m,
\]
into nonempty sets $\mathbb{N}_m$ such that for each $m$
\[
 \mathbb{J}_m=\bigcup_{k\in\mathbb{N}_m}\mathbb{I}_k.
\]
Now define a function $\Psi$ by the formula
\begin{align*}
 \Psi(\beta_k)&=\varepsilon_m  &\text{for}&\quad k\in\mathbb{N}_m,& &m=1,2,\dots\\
 \Psi(\la)&=0 &\text{for}&\quad\la\ne\varepsilon_m,& &m=1,2,\dots.
\end{align*}
It is easily seen that $S=\Psi(U)$, which shows that $S$ is minimal.

Assume now that there exists a minimal weakly sufficient statistic $S$, and that condition \eqref{z} fails. For the simplicity of notation let
\[
 \dim[\{e_1\varphi_{\theta}:\theta\in\Theta\}]=0.
\]
For $n\geqslant2$ define statistics $T_n$ by the formula
\[
 T_n=\lambda_n(e_1+e_n)+\sum_{k\notin\{1,\,n\}}\lambda_ke_k,
\]
where $\lambda_k$ and $e_k$ are as in \eqref{drs}. Since
\[
 (e_1+e_n)\varphi_{\theta}=e_n\varphi_{\theta},
\]
and $T$ is weakly sufficient, it follows that the $T_n$ are weakly sufficient too. From the minimality of $S$ there are real-valued Borel functions $\Psi,\,\Psi_n$, such that
\[
 S=\Psi(T)=\Psi_n(T_n)\qquad\text{for all}\quad n\in\mathbb{N}.
\]
We have
\[
 \Psi(T)=\sum_{k=1}^{\infty}\Psi(\lambda_k)e_k,\qquad
 \Psi_n(T_n)=\Psi_n(\lambda_n)(e_1+e_n)+\sum_{k\notin\{1,\,n\}}\Psi_n(\lambda_k)e_k,
\]
which implies that for each $n\in\mathbb{N}$
\[
 \Psi_n(\lambda_n)=\Psi(\lambda_1)=\Psi(\lambda_n).
\]
Consequently,
\[
 S=\Psi(T)=\sum_{k=1}^{\infty}\Psi(\lambda_k)e_k=
 \Psi(\lambda_1)\sum_{k=1}^{\infty}e_k=\Psi(\lambda_1)\boldsymbol{1},
\]
which contradicts the weak sufficiency of $S$.
\end{proof}

\section{Sufficiency vs weak sufficiency for quantum statistics}
In our further analysis we want to investigate in some detail the notion of sufficiency in the sense of Petz (cf. \cite{P1,P2} and \cite{JP}) adapted to the present setup, and to compare it with weak sufficiency. Since the algebra $\N$ is abelian, the two-positivity of the map $\al\colon\mathbb{B}(\Hi)\to\N$ defining sufficiency is equivalent to its positivity. Thus let us assume that $\al\colon\mathbb{B}(\Hi)\to\N$ is a normal positive contraction. Then for each $x\in\mathbb{B}(\Hi)$ there is a bounded Borel function $\Phi(\cdot;x)$ such that
\[
 \al(x)=\int_{-\infty}^{\infty}\Phi(\la;x)\,e(d\la).
\]
Moreover, we have
\[
 \underset{\la}{\sup\operatorname{ess}}|\Phi(\la;x)|=\|\al(x)\|\leqslant\|x\|.
\]
It turns out that $\al$ admits a more specific representation.
\begin{theorem}
Let $\{\varphi_{\theta}:\theta\in\Theta\}$ be a family of vector states, and let $\al\colon\mathbb{B}(\Hi)\to\N$ be a positive contraction such that the states $\f_{\theta}$ are $\al$-invariant. Then $\f_{\theta}$ are pairwise orthogonal, thus we have \linebreak $\{\varphi_{\theta}:\theta\in\Theta\}=\{\f_n\}$, where $n$ runs over a countable set of positive integers. Moreover, there exist pairwise orthogonal non-zero projections $e_n\in\N$ and a Borel set $E\subset\mathbb{R}$ such that
\[
 \sum_ne_n=e(E),
\]
and
\begin{equation}\label{przedstal}
 \al(x)=\sum_n\langle x\f_n,\f_n\rangle\,e_n
 +\int_{E'}\Phi(\la;x)\,e(d\la),\qquad x\in\mathbb{B}(\Hi).
\end{equation}
\end{theorem}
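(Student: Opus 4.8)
The plan is to exploit the fact that vector states are \emph{pure} states on $\mathbb{B}(\Hi)$, together with a disintegration of the normal positive map $\al$ over the abelian algebra $\N$. Writing the $\al$-invariance as $\langle\al(x)\f_\theta,\f_\theta\rangle=\langle x\f_\theta,\f_\theta\rangle$ and integrating the representation $\al(x)=\int\Phi(\la;x)\,e(d\la)$ against the spectral distribution $\mu_\theta(\cdot)=\langle e(\cdot)\f_\theta,\f_\theta\rangle$, invariance becomes a barycentre identity: the state $\langle\,\cdot\,\f_\theta,\f_\theta\rangle$ is the average, with respect to $\mu_\theta$, of the ``pointwise'' functionals $x\mapsto\Phi(\la;x)$. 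Since a pure state is the barycentre of no probability measure other than the Dirac measure at itself, these pointwise functionals must coincide with $\langle\,\cdot\,\f_\theta,\f_\theta\rangle$ for $\mu_\theta$-almost every $\la$. That single observation will yield both the pairwise orthogonality of the states and the asserted formula.

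To make the disintegration rigorous I would first fix a faithful normal state on $\N$ (available since $\Hi$ is separable), giving an identification $\N\cong L^\infty(\mathbb{R},\nu)$ with $\nu$ having the same null sets as $e$. I would then pass to the predual map $\al_*\colon L^1(\nu)\to\mathcal{T}(\Hi)$ into the trace class and consider the $\mathcal{T}(\Hi)$-valued measure $B\mapsto\al_*(\mathbf 1_B)$, which is positive, $\nu$-absolutely continuous and of bounded variation. Because $\mathcal{T}(\Hi)=\mathcal{K}(\Hi)^*$ is a separable dual it has the Radon--Nikodym property, so there is a Bochner-measurable field $\la\mapsto\rho_\la$ of positive trace-class operators with $\al(x)(\la)=\Phi(\la;x)=\operatorname{Tr}(\rho_\la x)$ for $\nu$-a.e.\ $\la$; the contraction property forces $\operatorname{Tr}\rho_\la\leq1$ a.e. Each $\mu_\theta$ is absolutely continuous with respect to $\nu$, so $\sigma_\theta:=\int\rho_\la\,d\mu_\theta(\la)$ is a well-defined positive trace-class operator, and the barycentre identity reads $\operatorname{Tr}(\sigma_\theta x)=\langle x\f_\theta,\f_\theta\rangle$ for all $x$, that is, $\sigma_\theta=P_{[\f_\theta]}$.

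The key step is then purely measure-theoretic. Taking traces in $\int\rho_\la\,d\mu_\theta=P_{[\f_\theta]}$ gives $\int\operatorname{Tr}\rho_\la\,d\mu_\theta=1$, whence $\operatorname{Tr}\rho_\la=1$ for $\mu_\theta$-a.e.\ $\la$, so $\rho_\la$ is a genuine state $\mu_\theta$-a.e. Since $\langle\rho_\la\f_\theta,\f_\theta\rangle\leq\operatorname{Tr}\rho_\la=1$ and these quantities average to $\langle P_{[\f_\theta]}\f_\theta,\f_\theta\rangle=1$, we get $\langle\rho_\la\f_\theta,\f_\theta\rangle=1$ $\mu_\theta$-a.e.; and a state $\rho$ with $\langle\rho\f_\theta,\f_\theta\rangle=\operatorname{Tr}\rho=1$ must equal $P_{[\f_\theta]}$, because then all of its spectral mass lies along $\f_\theta$. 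Hence $\rho_\la=P_{[\f_\theta]}$ for $\mu_\theta$-a.e.\ $\la$.

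Finally I would assemble the conclusion. Put $A_n=\{\la:\rho_\la=P_{[\f_n]}\}$ as $\f_n$ ranges over the \emph{distinct} states; these are pairwise disjoint Borel sets, and the previous step gives $\mu_n(A_n)=1$ while $\mu_m(A_n)=0$ for $m\neq n$. Setting $e_n=e(A_n)\in\N$ we obtain pairwise orthogonal non-zero projections with $e_n\f_n=\f_n$ and $e_n\f_m=0$ for $m\neq n$; consequently $\langle\f_n,\f_m\rangle=\langle e_n\f_n,\f_m\rangle=0$, which proves pairwise orthogonality, and the separability of $\Hi$ then forces the distinct states to form a countable family $\{\f_n\}$. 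With $E=\bigcup_nA_n$, so that $\sum_ne_n=e(E)$, the identity $\Phi(\la;x)=\langle x\f_n,\f_n\rangle$ on $A_n$ gives $\al(x)e_n=\langle x\f_n,\f_n\rangle e_n$, and splitting the integral $\int\Phi(\la;x)\,e(d\la)$ over $E$ and its complement yields exactly \eqref{przedstal}. The main obstacle is the disintegration in the second paragraph: one must ensure that the field $\rho_\la$ is genuinely measurable and that $\Phi(\la;\cdot)$ is realised \emph{simultaneously} for all $x$ as $\operatorname{Tr}(\rho_\la\,\cdot)$, which is precisely what the Radon--Nikodym property of $\mathcal{T}(\Hi)$ secures; everything after that is soft.
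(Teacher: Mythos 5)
Your argument is correct, but it reaches the crucial intermediate object by a genuinely different route from the paper. The paper builds the pointwise field by hand: it restricts $\Phi(\la;\cdot)$ to the rank-one operators $t_{\xi,\eta}$ with $\xi,\eta$ running over a countable rational-linear dense subset of $\Hi$, discards countably many $e$-null sets so that $(\xi,\eta)\mapsto\Phi(\la;t_{\xi,\eta})$ becomes a bounded positive sesquilinear form for every $\la$ in one fixed set of full $e$-measure, and extends by continuity to a weakly measurable field of operators $a(\la)$ with $0\leq a(\la)\leq\boldsymbol{1}$. You instead pass to the predual $\al_*\colon L^1(\nu)\to\mathcal{T}(\Hi)$ and invoke the Radon--Nikodym property of the separable dual $\mathcal{T}(\Hi)$ to produce a trace-class density $\rho_\la$ in one stroke; the two fields agree on rank-one operators, since $\langle a(\la)\xi,\eta\rangle=\operatorname{Tr}(\rho_\la t_{\xi,\eta})$. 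Your version carries the extra normalization $\operatorname{Tr}\rho_\la\leq1$, and this pays off in the next step: from the barycentre identity $\int\rho_\la\,d\mu_\theta=P_{[\f_\theta]}$ the two scalar equalities $\operatorname{Tr}\rho_\la=1$ and $\langle\rho_\la\f_\theta,\f_\theta\rangle=1$ ($\mu_\theta$-a.e.) already force $\rho_\la=P_{[\f_\theta]}$, whereas the paper, having only $0\leq a(\la)\leq\boldsymbol{1}$, must first deduce $a(\la)\f=\f$ and then separately annihilate $a(\la)$ on a (countable dense subset of the) orthogonal complement of $\f$. From that point on --- disjointness of the sets on which the field equals the various $P_{[\f_n]}$, orthogonality and hence countability of the states, and the splitting of $\int\Phi(\la;x)\,e(d\la)$ over $E$ and $E'$ --- the two arguments coincide. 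What the paper's construction buys is self-containedness (nothing beyond countable intersections of full-measure sets and extension of forms); what yours buys is brevity and a cleaner purity argument, at the cost of importing vector-measure machinery whose routine verifications (measurability and a.e.\ positivity of $\rho_\la$, bounded variation of $B\mapsto\al_*(\mathbf{1}_B)$) you correctly single out as the only points needing care.
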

\begin{proof}
For a fixed orthonormal basis in  $\Hi$, let $\Hii$ be the
set of all finite linear combinations of elements of this basis with
`rational complex' coefficients, where by a `rational complex number' is meant a complex number of the form $\beta+\gamma i$ with $\beta$ and $\gamma$ rational numbers. Then $\Hii$ is a dense countable subset of $\Hi$, closed with respect to addition and multiplication by rational complex numbers. For arbitrary $\xi,\eta\in\Hi$ let $t_{\xi,\eta}$ be operators on $\Hi$ defined by the formula
\[
 t_{\xi,\eta}\zeta=\langle\zeta,\eta\rangle\xi,\qquad\zeta\in\Hi.
\]
It is obvious that for any
$\xi_1,\xi_2,\xi,\eta_1,\eta_2,\eta\in\Hi,\,\gamma\in\mathbb{C}$ we
have
\begin{align*}
 &t_{\xi_1+\xi_2,\eta}=t_{\xi_1,\eta}+t_{\xi_2,\eta}\\
 &t_{\xi,\eta_1+\eta_2}=t_{\xi,\eta_1}+t_{\xi,\eta_2}\\
 &t_{\gamma\xi,\eta}=\gamma t_{\xi,\eta}.
\end{align*}
Consequently, for any $\xi_1,\xi_2,\eta\in\Hii$ we get
\begin{align*}
 &\int_{-\infty}^{\infty}\Phi(\la;t_{\xi_1+\xi_2,\eta})\,e(d\la)=
 \al(t_{\xi_1+\xi_2,\eta}) =\al(t_{\xi_1,\eta})+\al(t_{\xi_2,\eta})\\
 =&\int_{-\infty}^{\infty}[\Phi(\la;t_{\xi_1,\eta})+\Phi(\la;t_{\xi_2,\eta})]\,e(d\la),
\end{align*}
and thus
\begin{equation}\label{add}
 \Phi(\la;t_{\xi_1+\xi_2,\eta})=\Phi(\la;t_{\xi_1,\eta})
 +\Phi(\la;t_{\xi_2,\eta}) \quad e-a.e.,
\end{equation}
i.e., there is a set $\La_{\xi_1,\xi_2,\eta}$ of full $e$--measure
such that for all $\la\in\La_{\xi_1,\xi_2,\eta}$ equality \eqref{add}
holds. Similarly with addition in the second position and
multiplication by a rational complex number, thus taking (countable!) intersection of all these sets of full $e$--measure we obtain a set $\La_1$ of full $e$--measure such that for all $\la\in\La_1$, all
$\xi_1,\xi_2,\eta_1,\eta_2\in\Hii$, and all rational complex numbers
$\gamma$ we have
\begin{equation}\label{w1}
 \begin{aligned}
 &\Phi(\la;t_{\xi_1+\xi_2,\eta_1})=\Phi(\la;t_{\xi_1,\eta_1})+
 \Phi(\la;t_{\xi_2,\eta_1}),\\
 &\Phi(\la;t_{\xi_1,\eta_1+\eta_2})=\Phi(\la;t_{\xi_1,\eta_1})+
 \Phi(\la;t_{\xi_1,\eta_2}),\\
 &\Phi(\la;\gamma t_{\xi_1,\eta_1})=\gamma\Phi(\la;t_{\xi_1,\eta_1}).
 \end{aligned}
\end{equation}
Further, for each $\xi\in\Hii$ we have $t_{\xi,\xi}\geqslant 0$, thus
\[
 0\leqslant\al(t_{\xi,\xi})=\int_{-\infty}^{\infty} \Phi(\la;t_{\xi,\xi})\,e(d\la),
\]
which means that
\[
 \Phi(\la;t_{\xi,\xi})\geqslant 0 \quad e-a.e.
\]
Consequently, there is a set $\La_{\xi}$ of full $e$--measure such
that for all $\la\in\La_{\xi}$
\begin{equation}\label{w2}
 \Phi(\la;t_{\xi,\xi})\geqslant 0.
\end{equation}
Putting
\[
 \La_2=\bigcap_{\xi\in\Hii}\La_{\xi},
\]
we obtain a set $\La_2$ of full $e$--measure such that for all
$\la\in\La_2$ and all $\xi\in\Hii$
\[
 \Phi(\la;t_{\xi,\xi})\geqslant 0.
\]
For any fixed $\xi,\eta\in\Hii$ there is a set $\La_{\xi,\eta}$ of
full $e$--measure such that for all $\la\in\La_{\xi,\eta}$ we have
\[
 |\Phi(\la;t_{\xi,\eta})|\leqslant\underset{\omega}{\sup\operatorname{ess}}
 |\Phi(\omega;t_{\xi,\eta})|\leqslant\|t_{\xi,\eta}\|=\|\xi\|\|\eta\|.
\]
Put
\[
 \La_3=\bigcap_{\xi,\eta\in\Hii}\La_{\xi,\eta}.
\]
Then $\La_3$ is of full $e$--measure, and we have for all
$\la\in\La_3$ and all $\xi,\eta\in\Hii$
\begin{equation}\label{w3}
 |\Phi(\la;t_{\xi,\eta})|\leqslant\|\xi\|\|\eta\|.
\end{equation}
Put
\[
 \La=\La_1\cap\La_2\cap\La_3.
\]
Then $\La$ is of full $e$--measure, and for all $\la\in\La$, all $\xi_1,\xi_2,\xi,\eta_1,\eta_2,\eta\in\Hii$, and all rational complex numbers $\gamma$ relations \eqref{w1}, \eqref{w2} and \eqref{w3} hold.

For $\la\in\La$ we define on
$\Hii\times\Hii$ a function $h(\la;\cdot,\cdot)$ by the formula
\[
 h(\la;\xi,\eta)=\Phi(\la;t_{\xi,\eta}).
\]
Obviously, for all $\xi,\eta\in\Hii$, $h(\cdot;\xi,\eta)$ is a Borel function.
$h(\la;\cdot,\cdot)$ is a sesquilinear form on $\Hii\times\Hii$ with
respect to multiplication by rational complex numbers; moreover, we
have
\[
 |h(\la;\xi,\eta)|\leqslant\|\xi\|\|\eta\|\qquad\text{and}
 \qquad h(\la;\xi,\xi)\geqslant0.
\]
Let $\Hii\ni\xi_n\to\xi,\;\Hii\ni\eta_n\to\eta$. Then for any
$\la\in\La$ we have
\begin{equation}\label{jedn}
 \begin{aligned}
 &|h(\la;\xi_n,\eta_n)-h(\la;\xi_m,\eta_m)|\leqslant|h(\la;\xi_n,\eta_n)
 -h(\la;\xi_n,\eta_m)|\\+&|h(\la;\xi_n,\eta_m)-h(\la;\xi_m,\eta_m)|
 =|h(\la;\xi_n,\eta_n-\eta_m)|\\+&|h(\la;\xi_n-\xi_m,\eta_m)|\\\leqslant
 &\|\xi_n\|\|\eta_n-\eta_m\|+\|\xi_n-\xi_m\|\|\eta_m\|
 \underset{n,m\to\infty}{\longrightarrow}0,
 \end{aligned}
\end{equation}
thus we can define for $\la\in\La$, a function
$\widetilde{h}(\la;\cdot,\cdot)$ on $\Hi\times\Hi$ by the formula
\[
 \widetilde{h}(\la;\xi,\eta)=\lim_{n\to\infty}h(\la;\xi_n,\eta_n),
\]
where $\xi_n,\;\eta_n$ are as before. The reasoning as in
\eqref{jedn} shows that $\widetilde{h}(\la;\xi,\eta)$ does not depend
on the approximating sequences $\{\xi_n\},\:\{\eta_n\}$. Again, we immediately notice that for all $\xi,\eta\in\Hi$, $\widetilde{h}(\cdot;\xi,\eta)$ is a Borel function. $\widetilde{h}(\la;\cdot,\cdot)$ is clearly a sesquilinear form on
$\Hi\times\Hi$, and
\[
 |\widetilde{h}(\la;\xi,\eta)|=\lim_{n\to\infty}|h(\la;\xi_n,\eta_n)|
 \leqslant\lim_{n\to\infty}\|\xi_n\|\|\eta_n\|=\|\xi\|\|\eta\|.
\]
Thus there exists an operator $a(\la)$ of norm $\leqslant1$, such that
\[
 \langle a(\la)\xi,\eta\rangle=\widetilde{h}(\la;\xi,\eta),
 \qquad\xi,\eta\in\Hi;
\]
moreover, the operator-valued function $\la\mapsto a(\la)$ is weakly Borel measurable. For $\xi,\eta\in\Hii$ we have
\begin{equation}\label{al}
 \langle a(\la)\xi,\eta\rangle=h(\la;\xi,\eta)=\Phi(\la;t_{\xi,\eta}).
\end{equation}
In particular, for $\xi\in\Hii$
\[
 \langle a(\la)\xi,\xi\rangle=\Phi(\la;t_{\xi,\xi})\geqslant 0,
\]
showing that
\begin{equation}\label{nier}
 0\leqslant a(\la)\leqslant\boldsymbol{1}.
\end{equation}

Let now $\f$ be an $\al$-invariant vector state. Define an operator $a$ (depending on $\varphi$) as a weak integral
\[
 a=\int_{-\infty}^{\infty} a(\la)\,\|e(d\la)\f\|^2,
\]
i.e.,
\[
 \langle a\xi,\eta\rangle=\int_{-\infty}^{\infty}\langle a(\la)\xi,\eta\rangle\,
 \|e(d\la)\f\|^2,\qquad\xi,\eta\in\Hi.
\]
For $\xi,\eta\in\Hii$ we have
\begin{align*}
 \langle &P_{[\f]}\xi,\eta\rangle=\langle\xi,\f\rangle\langle\f,\eta\rangle
 =\langle t_{\xi,\eta}\f,\f\rangle=\langle\al(t_{\xi,\eta})\f,\f\rangle\\
 =&\int_{-\infty}^{\infty} \Phi(\la;t_{\xi,\eta})\,\|e(d\la)\f\|^2=\int_{-\infty}^{\infty}\langle a(\la)\xi,\eta\rangle\,
 \|e(d\la)\f\|^2=\langle a\xi,\eta\rangle,
\end{align*}
which shows that
\[
 a=P_{[\f]}.
\]
Consequently,
\[
 1=\langle a\f,\f\rangle=\int_{-\infty}^{\infty}\langle a(\la)\f,\f\rangle\, \|e(d\la)\f\|^2,
\]
and since
\[
 \langle a(\la)\f,\f\rangle\leqslant 1,
\]
we get
\[
 \langle a(\la)\f,\f\rangle=1\qquad \|e(\cdot)\f\|^2-a.e.
\]
On account of relation \eqref{nier}, the above equality yields
\[
 a(\la)\f=\f \qquad \|e(\cdot)\f\|^2-a.e.
\]
For $\xi\perp\f$ we have
\[
 0=\langle a\xi,\xi\rangle=\int_{-\infty}^{\infty}\langle a(\la)\xi,\xi\rangle \|e(d\la)\f\|^2,
\]
hence
\[
 \langle a(\la)\xi,\xi\rangle=0 \qquad \|e(\cdot)\f\|^2-a.e.,
\]
and thus
\[
 a(\la)\xi=0 \qquad \|e(\cdot)\f\|^2-a.e.
\]
Consequently, we obtain
\[
 a(\la)=P_{[\f]} \qquad \|e(\cdot)\f\|^2-a.e.,
\]
thus there is a set $E_{\f}\in\mathcal{B}(\mathbb{R})$ with
\begin{equation}\label{rf}
 \|e(E_{\f})\f\|^2=1,
\end{equation}
such that for all $\la\in E_{\f}$
\[
 a(\la)=P_{[\f]}.
\]
Relation \eqref{al} now yields for all $\xi,\eta\in\Hii$ and all
$\la\in\La\cap E_{\f}$
\begin{equation}\label{const}
 \Phi(\la;t_{\xi,\eta})=\langle P_{[\f]}\xi,\eta\rangle
 =\langle t_{\xi,\eta}\f,\f\rangle.
\end{equation}
Observe that since $e(E_{\f})$ is a projection, from equality \eqref{rf} it follows that
\begin{equation}\label{rf1}
 e(E_{\f})\f=\f.
\end{equation}

Let now $\f$ and $\psi$ be two different $\al$-invariant vector states.
Then we have
\[
 \La\cap E_{\f}\cap E_{\psi}=\emptyset.
\]
Indeed, if $\la\in\La\cap E_{\f}\cap E_{\psi}$ then we would have for all $\xi,\eta\in\Hii$
\[
 \langle P_{[\f]}\xi,\eta\rangle=\Phi(\la;t_{\xi,\eta})
 =\langle P_{[\psi]}\xi,\eta\rangle,
\]
giving the equality $P_{[\f]}=P_{[\psi]}$. In particular, since $\La$ is of full $e$--measure we obtain by \eqref{rf1}
\[
 \langle\f,\psi\rangle=\langle e(E_{\f})\f,e(E_{\psi})\psi\rangle=\langle e(E_{\f}\cap\La)\f,e(E_{\psi}\cap\La)\psi\rangle=0,
\]
showing that different $\al$-invariant vector states are orthogonal.

Let now $\{\f_{\theta}:\theta\in\Theta\}$ be a family of
$\al$-invariant vector states. Since $\f_{\theta}$ are pairwise
orthogonal we have $\{\f_{\theta}\}=\{\f_{\theta_n}\}$. Denote
$\f_{\theta_n}=\f_n$, and let $E_{\f_n}$ be the sets as above. Put
\begin{equation}\label{p}
 E_n=E_{\f_n}\cap\Lambda,\quad e_n=e(E_n),\quad E=\bigcup_n E_n,
 \quad p=\sum_n e_n.
\end{equation}
Then $E_n$ are pairwise disjoint, so $e_n$ are orthogonal,
$P_{[\f_n]}\leqslant e_n$, and for $\xi,\eta\in\Hii$ we have on account of \eqref{const}
\begin{equation}\label{postal}
 \begin{aligned}
 \al(t_{\xi,\eta})&=\int_E \Phi(\la;t_{\xi,\eta})\,e(d\la)+
 \int_{E'}\Phi(\la;t_{\xi,\eta})\,e(d\la)\\
 &=\sum_n\int_{E_n}\Phi(\la;t_{\xi,\eta})\,e(d\la)+
 \int_{E'}\Phi(\la;t_{\xi,\eta})\,e(d\la)\\&=\sum_n\langle P_{[\f_n]}
 \xi, \eta\rangle\, e(E_n)+\int_{E'}\Phi(\la;t_{\xi,\eta})\,e(d\la)\\
 &=\sum_n\langle t_{\xi,\eta}\f_n,\f_n\rangle\, e_n+
 \int_{E'}\Phi(\la;t_{\xi,\eta})\,e(d\la).
 \end{aligned}
\end{equation}
From \eqref{postal} we get for each $n$
\[
 e_n\al(t_{\xi,\eta})=\langle t_{\xi,\eta}\f_n,\f_n\rangle\,e_n,
\]
and taking into account the continuity of $\al$ we obtain by approximation
\[
 e_n\al(x)=\langle x\f_n,\f_n\rangle\,e_n
\]
for each $x\in\mathbb{B}(\Hi)$. Consequently,
\[
 p\,\al(x)=\sum_n\langle x\f_n,\f_n\rangle\,e_n,
\]
so
\begin{equation*}
 \begin{aligned}
 \al(x)&=p\,\al(x)+p^{\bot}\al(x)\\&=\sum_n\langle x\f_n,\f_n\rangle\,e_n
 +p^{\bot}\int_{-\infty}^{\infty}\Phi(\la;x)\,e(d\la)\\&=\sum_n\langle x\f_n,\f_n\rangle\,e_n
 +\int_{E'}\Phi(\la;x)\,e(d\la).
 \end{aligned}
\end{equation*}
\end{proof}
\begin{remark}
It should be noted that representation \eqref{przedstal} is non-unique since the sets $E_n$ are non-unique, and consequently, the projections $e_n$ are non-unique. However, in any case we have $P_{[\f_n]}\leq e_n$ and $e_ne_m=0$ for $n\neq m$.
\end{remark}
It turns out that despite its non-uniqueness representation \eqref{przedstal} yields a number of
interesting consequences. First, denote by $\N_0$ the algebra
\[
 \N_0=\Big\{\sum_n\gamma_ne_n:\gamma_n\in\mathbb{C}\Big\},
\]
and define a map $\al_p\colon\mathbb{B}(\Hi)\to\N_0$ by the formula
\[
 \al_p(x)=p\,\al(x),
\]
where $p$ is given by \eqref{p}. Then $\al_p$ is a normal positive projection of norm one onto the algebra $\N_0$. In particular, if the family $\{\f_n\}$ of states is faithful, then it is an orthonormal basis in $\Hi$, so
\[
 \sum_nP_{[\f_n]}=\boldsymbol{1},
\]
yielding the equalities
\[
 e_n=P_{[\f_n]}.
\]
It follows that $\N_0=\N$. Indeed, we have $\N_0\subset\N$ and since
$\N_0$ is generated by the minimal projections $P_{[\f_n]}$, it
is maximal abelian (cf. \cite[Theorem 9.4.1]{KR2}). Hence
\[
 \N\subset\N'\subset\N'_0=\N_0.
\]
Further
\[
 \al(x)=\sum_n\langle x\f_n,\f_n\rangle P_{[\f_n]},
\]
thus $\al$ is a conditional expectation onto $\N$.

Next, it turns out that the notions of sufficiency and weak sufficiency are related in a natural way.
\begin{theorem}
Let a statistic $T$ be sufficient in the sense of Petz for a family of vector states $\{\f_{\theta}:\theta\in\Theta\}$, i.e., there exists a positive normal unital map $\al\colon\mathbb{B}(\Hi)\to\N$ such that
\[
 \langle\alpha(x)\varphi_{\theta},\varphi_{\theta}\rangle=\langle x\varphi_{\theta},\varphi_{\theta}\rangle
 \qquad\text{for all }\,x\in\mathbb{B}(\Hi),\,\theta\in\Theta.
\]
If $\al(\mathbb{B}(\Hi))=\N$, then $T$ is weakly sufficient.
\end{theorem}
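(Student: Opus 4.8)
The plan is to reduce everything to condition (ii) of Theorem~\ref{M1}, feeding it the structure supplied by the preceding theorem together with the surjectivity hypothesis $\al(\mathbb{B}(\Hi))=\N$.

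First I would invoke the preceding theorem: a positive normal unital map is in particular a positive contraction, and $\al$-invariance of the states is exactly the hypothesis $\langle\al(x)\f_{\theta},\f_{\theta}\rangle=\langle x\f_{\theta},\f_{\theta}\rangle$, so its assumptions are met. This yields that the states are pairwise orthogonal, whence $\{\f_{\theta}:\theta\in\Theta\}=\{\f_n\}$ is countable, and it produces pairwise orthogonal projections $e_n\in\N$ with $P_{[\f_n]}\leq e_n$, a Borel set $E$, and $p=\sum_n e_n=e(E)$ for which representation \eqref{przedstal} holds. I set $\N_0=\{\sum_n\gamma_n e_n:\gamma_n\in\mathbb{C}\}$ as before.

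The key step --- and the one I expect to be the main obstacle --- is to extract from $\al(\mathbb{B}(\Hi))=\N$ the fact that $e(E)\N=\N_0$, i.e.\ that the $e_n$ are the minimal projections of $\N$ below $e(E)$. For this I would multiply \eqref{przedstal} by $p=e(E)$: since $e(E)$ annihilates the integral over $E'$, one gets $e(E)\al(x)=\sum_n\langle x\f_n,\f_n\rangle e_n\in\N_0$ for every $x$. Because $\al$ is onto $\N$, every $y\in\N$ equals $\al(x)$ for some $x$, whence $e(E)y\in\N_0$; thus $e(E)\N\subseteq\N_0$, the reverse inclusion being trivial. Consequently each $e_n$ is minimal in the abelian algebra $e(E)\N$, so every element $\sum_m\gamma_m e_m$ of $e(E)\N$ acts as the scalar $\gamma_n$ on each subspace $e_n\Hi$.

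With this in hand I would put $p'=\sum_n P_{[\f_n]}$ and verify the three requirements of condition (ii). That $p'\in\N'$ follows by splitting $\N=\N_0\oplus e(E')\N$: elements of $e(E')\N$ live on $e(E')\Hi$, which is orthogonal to $p'\Hi\subseteq e(E)\Hi$, so they commute with $p'$, while elements of $\N_0=e(E)\N$ act as scalars on each $e_n\Hi\ni\f_n$ and hence commute with the one-dimensional projection $P_{[\f_n]}$. Clearly $\{\f_n\}\subset p'\Hi$. Since the unit vectors $\f_n$ form an orthonormal basis of $p'\Hi$ and, by the scalar action just noted, $xp'|_{p'\Hi}=\sum_n\gamma_n P_{[\f_n]}|_{p'\Hi}$ is the corresponding diagonal operator for $x=\sum_n\gamma_n e_n\in\N$, the induced algebra $\N_{p'}$ is precisely the algebra of all diagonal operators in this basis, which is maximal abelian. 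Finally, the reality condition \eqref{rzecz1} holds already for the versions $\widetilde{\f}_n=\f_n$: since any spectral projection $e(F)$ commutes with every $e_n$, the vector $e(F)\f_{n'}$ remains in $e_{n'}\Hi$, so $\langle e(F)\f_{n'},\f_{n''}\rangle$ vanishes for $n'\neq n''$ and equals $\|e(F)\f_n\|^2\geq0$ for $n'=n''$; in either case it is real. Theorem~\ref{M1} then gives that $T$ is weakly sufficient.
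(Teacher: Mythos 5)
Your proposal is correct and follows essentially the same route as the paper: both extract from the surjectivity of $\al$ that the $e_n$ are minimal in $\N$ (your reformulation $e(E)\N=\N_0$ is equivalent), deduce $p'=\sum_nP_{[\f_n]}\in\N'$, observe that $\N_{p'}$ is the maximal abelian diagonal algebra in the basis $\{\f_n\}$, check the reality condition via $e_ne_m=0$, and conclude by Theorem~\ref{M1}. The only difference is cosmetic: you verify $p'\in\N'$ by direct computation from the explicit form of elements of $\N$, whereas the paper argues by contradiction from minimality.
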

\begin{remark}
Observe that the theorem gives a full answer to the problem of the relation between weak sufficiency and sufficiency in the sense of Umegaki because then $\al$ is a conditional expectation, and thus `onto'.
\end{remark}
\begin{proof}
Adopting the notation as before we have $\{\f_{\theta}\}=\{\f_n\}$, and formula \eqref{przedstal} for $\al$. From the assumption that $\al$ is `onto' it follows that $e_n$ are minimal projections in $\N$. Indeed, if $q\leqslant e_n$, for a projection $q$ in $\N$, then
\[
 q=\al(x)
\]
for some $x\in\mathbb{B}(\Hi)$, and we get
\[
 q=qe_n=\al(x)e_n=\langle x\f_n,\f_n\rangle\,e_n,
\]
thus $q=e_n$.

From the minimality of $e_n$ and the relation
$P_{[\f_n]}\leqslant e_n$ it follows that $P_{[\f_n]}\in\N'$. Indeed,
if $P_{[\f_n]}q\ne qP_{[\f_n]}$ for some projection $q\in\N$, then
$P_{[\f_n]}q\ne0$, so
\[
 0\ne P_{[\f_n]}qP_{[\f_n]}=P_{[\f_n]}e_nqe_nP_{[\f_n]},
\]
giving
\[
 e_nq=e_nqe_n\ne0,
\]
and since $e_nq$ is a projection in $\N$ such that $e_nq\leqslant
e_n$, we get $e_nq=e_n$ because $e_n$ is minimal. Hence
$P_{[\f_n]}q=P_{[\f_n]}=qP_{[\f_n]}$, a contradiction.

Put
\[
 p'=\sum_nP_{[\f_n]}.
\]
Then $p'$ is a projection in $\N'$. The elements of the algebra
$\N p'$ have the form
\[
 \al(x)p'=\sum_n\langle x\f_n,\f_n\rangle P_{[\f_n]},
\]
and since the von Neumann algebra $\N_{p'}$ is generated by the minimal projections $P_{[\f_n]}|p'\Hi$, it is maximal abelian
(cf. \cite[Theorem 9.4.1]{KR2}). Moreover, $\{\f_n\}\subset p'\Hi$, and for any $n\ne m$ and each $B\in\mathcal{B}(\mathbb{R})$ we have
\begin{align*}
 \langle e(B)\f_n,\f_m\rangle&=\langle e(B)e_n\f_n,e_m\f_m\rangle
 =\langle e_m e(B)e_n\f_n,\f_m\rangle\\&=\langle e(B)e_m e_n \f_n,\f_m\rangle=0,
\end{align*}
hence on account of Theorem \ref{M1} we obtain the weak sufficiency of $\N$.
\end{proof}
\begin{remark}
As seen from the above proof, the assumption that $\al$ is unital may be abandoned.
\end{remark}
The following simple example shows that weak sufficiency is indeed weaker than sufficiency in the sense of Petz.
\begin{example}
 Let $\mathcal{H}=\mathbb{C}^2$,
\[
 \varphi_1=\begin{pmatrix}
  1\\0
  \end{pmatrix},\qquad
  \varphi_2=\begin{pmatrix}
   \frac{1}{\sqrt2}\\
   \frac{1}{\sqrt2}\end{pmatrix},
\]
and let $T$ be represented in the standard basis of $\mathbb{C}^2$ as
\[
  T=\begin{bmatrix}
   1 & 0\\
   0 & -1
  \end{bmatrix}.
\]
Then
\[
 \mathcal{N}=\bigg\{\begin{bmatrix}a & 0\\0 & b\end{bmatrix}: a,\,b\in\mathbb{C}\bigg\}.
\]
$T$ is weakly sufficient for the family of states $\{\varphi_1,\,\varphi_2\}$. Indeed, taking for instance $\chi=\varphi_2$,
\[
 T_1=\begin{bmatrix}
   \sqrt2 & 0\\
   0 & 0
  \end{bmatrix},\qquad
  T_2=\begin{bmatrix}
   1 & 0\\
   0 & 1
  \end{bmatrix},
\]
we immediately verify that
\[
 T_1\chi=\varphi_1,\qquad T_2\chi=\varphi_2.
\]
Let
\[
 A=\begin{bmatrix}
   a_{11} & a_{12}\\
   a_{21} & a_{22}
  \end{bmatrix}
\]
be an arbitrary element of $\mathbb{B}(\mathbb{C}^2)$ represented in the standard basis. Then
\begin{align*}
 &\langle A\varphi_1,\varphi_1\rangle=a_{11},\\
 &\langle A\varphi_2,\varphi_2\rangle=\frac{a_{11}+a_{12}+a_{21}+a_{22}}{2}.
\end{align*}
Let now $\alpha$ be an arbitrary linear map from $\mathbb{B}(\mathbb{C}^2)$ into $\mathcal{N}$. Then $\alpha(A)$ has the form
\[
 \alpha(A)=\begin{bmatrix}
  a & 0\\
  0 & b
  \end{bmatrix},
\]
with some $a,\,b\in\mathbb{C}$. Assume that $\varphi_1$ and $\varphi_2$ are $\alpha$-invariant. We have
\[
 a=\langle\alpha(A)\varphi_1,\varphi_1\rangle=\langle A\varphi_1,\varphi_1\rangle=a_{11},
\]
and
\begin{align*}
 \frac{a+b}{2}&=\langle\alpha(A)\varphi_2,\varphi_2\rangle\\&=\langle A\varphi_2,\varphi_2\rangle=\frac{a_{11}+a_{12}+a_{21}+a_{22}}{2},
\end{align*}
which gives
\[
 a=a_{11},\qquad b=a_{12}+a_{21}+a_{22}.
\]
Consequently, $\alpha(A)$ must be of the form
\[
 \alpha(A)=\begin{bmatrix}
  a_{11} & 0\\
  0 & a_{12}+a_{21}+a_{22}
  \end{bmatrix},
\]
which implies, as is clearly seen, that $\alpha$ is not a positive map. Thus there does not exist a positive map $\alpha$ from $\mathbb{B}(\mathbb{C}^2)$ to $\mathcal{N}$ such that $\varphi_1$ and $\varphi_2$ are $\alpha$-invariant, showing that $T$ is not sufficient in the sense of Petz.
\end{example}


\begin{thebibliography}{99}
 \bibitem{C}
  E. Czkwianianc, \emph{On the factorization criterion for quantum
  statistics}, Math. Slovaca \textbf{55} (2005), 373--380.
 \bibitem{JP}
  A. Jen\v{c}ov\'a, D. Petz, \emph{Sufficiency in quantum
  statistical inference}, Comm. Math. Phys. \textbf{263} (2006),
  259--276.
 \bibitem{KR1}
  R.V. Kadison, J.R. Ringrose, \emph{Fundamentals of the Theory of
  Operator Algebras} vol. I, Academic Press, New York, London,
  1983.
 \bibitem{KR2}
  R.V. Kadison, J.R. Ringrose, \emph{Fundamentals of the Theory of
  Operator Algebras} vol. II, Academic Press, New York, London,
  1986.
 \bibitem{LLP}
  K. Lubnauer, A. \L uczak, H. Pods\c{e}dkowska, \emph{Weak sufficiency of quantum statistics}, Rep. Math. Phys. \textbf{60} (2007), 367--380.
 \bibitem{P1}
  D. Petz, \emph{Sufficient subalgebras and the relative entropy of states of a von Neumann algebra}, Comm. Math. Phys. \textbf{105} (1986), 123--131.
 \bibitem{P2}
  D. Petz, \emph{Sufficiency of channels over von Neumann algebras}, Quart. J. Math. \textbf{39} (1988), 907--1008.
 \bibitem{T}
  M. Takesaki, \emph{Theory of Operator Algebras I}, Springer, New
  York, 1979.
 \bibitem{U1}
  H. Umegaki, \emph{Conditional expectation in an operator
  algebra, III}, Kodai Math. Sem. Rep. \textbf{11} (1959), 51--64.
 \bibitem{U2}
  H. Umegaki, \emph{Conditional expectation in an operator
  algebra, IV (entropy and information)}, Kodai Math. Sem. Rep.
  \textbf{14} (1962), 59--85.
\end{thebibliography}
\end{document}